\documentclass[aps,twocolumn,10pt,superscriptaddress]{revtex4-2}

\usepackage{mathrsfs}
\usepackage{bbm}
\usepackage{amsmath}
\usepackage{amssymb}
\usepackage{amsthm}
\usepackage{graphicx}
\usepackage{MnSymbol}
\usepackage{mathtools}
\usepackage{stmaryrd}
\usepackage{enumitem}
\usepackage{bbold}
\usepackage[italicdiff]{physics}
\usepackage{dsfont}

\makeatletter
\usepackage{hyperref}
\usepackage{color}
\definecolor{supcol}{RGB}{10,50,180}
\definecolor{eqcol}{RGB}{220,10,100}
\hypersetup{
	colorlinks,
	citecolor=supcol,
	linkcolor=eqcol,
	urlcolor=supcol
}

\allowdisplaybreaks

\newtheorem{theorem}{Theorem}

\newtheorem{lemma}[theorem]{Lemma}

\newcommand{\mca}{\mathcal}
\newcommand{\mbb}{\mathbb}
\newcommand{\mds}{\mathds}

\newcommand{\sectionprl}[1]{{\em #1}\/.---}

\begin{document}
\title{Fidelity-dissipation relations in quantum gates}

\author{Tan Van Vu}
\email{tan.vu@yukawa.kyoto-u.ac.jp}
\affiliation{Analytical quantum complexity RIKEN Hakubi Research Team, RIKEN Center for Quantum Computing (RQC), 2-1 Hirosawa, Wako, Saitama 351-0198, Japan}
\affiliation{Yukawa Institute for Theoretical Physics, Kyoto University, Kitashirakawa Oiwakecho, Sakyo-ku, Kyoto 606-8502, Japan}

\author{Tomotaka Kuwahara}
\email{tomotaka.kuwahara@riken.jp}
\affiliation{Analytical quantum complexity RIKEN Hakubi Research Team, RIKEN Center for Quantum Computing (RQC), 2-1 Hirosawa, Wako, Saitama 351-0198, Japan}
\affiliation{PRESTO, Japan Science and Technology (JST), Kawaguchi, Saitama 332-0012, Japan}

\author{Keiji Saito}
\email{keiji.saitoh@scphys.kyoto-u.ac.jp}
\affiliation{Department of Physics, Kyoto University, Kyoto 606-8502, Japan}

\date{\today}

\begin{abstract}
Accurate quantum computing relies on the precision of quantum gates. However, quantum gates in practice are generally affected by dissipative environments, which can significantly reduce their fidelity. In this study, we elucidate fundamental relations between the average fidelity of generic quantum gates and the dissipation that occurs during the computing processes. Considering scenarios in which a quantum gate is subject to Markovian environments, we rigorously derive fidelity-dissipation relations that hold for arbitrary operational times. Intriguingly, when the quantum gate undergoes thermal relaxation, the result can be used as a valuable tool for estimating dissipation through experimentally measurable fidelity, without requiring detailed knowledge of the dissipative structure. For the case of arbitrary environments, we uncover a trade-off relation between the average fidelity and energy dissipation, implying that these quantities cannot be large simultaneously. Our results unveil the computational limitations imposed by thermodynamics, shedding light on the profound connection between thermodynamics and quantum computing.
\end{abstract}

\pacs{}
\maketitle

\section{Introduction}
Quantum computing, which exploits quantum features like coherence and entanglement, has the potential to outperform classical computers in terms of computational power \cite{Nielsen.2000,Ladd.2010.N,Harrow.2017.N,Arute.2019.N}. 
Over the past few decades, substantial efforts have been devoted to the experimental development of quantum computers using various platforms, such as superconducting qubits \cite{Wendin.2017.RPP,Gu.2017.PR,Kjaergaard.2020.ARCMP,Blais.2021.RMP}, trapped ions \cite{Haffner.2008.PR,Bruzewicz.2019.APR}, photonic systems \cite{Takeda.2019.APLP,Zhong.2020.S,Madsen.2022.N}, and nuclear magnetic resonance \cite{Warren.1997.S,Cory.1997.PNAS,Jones.2011.PNMRS}.
To achieve the long-sought goal of building a reliable quantum computer, numerous advancements are essential from both theoretical and experimental perspectives. 
A critical aspect of this endeavor involves developing a profound comprehension of the energetics of quantum computing \cite{Landauer.1961.JRD,Banacloche.2002.PRL,Sagawa.2009.PRL,Reeb.2014.NJP,Faist.2015.NC,Ikonen.2017.npjQI,Tajima.2018.PRL,Landi.2020.PRA,Cimini.2020.npjQI,Chiribella.2021.PRX,Deffner.2021.EPL,Funo.2021.PRL,Danageozian.2022.PRXQ,Stevens.2022.PRL,Vu.2022.PRL,Auffeves.2022.PRXQ}, an area that intimately intersects with related fields such as stochastic and quantum thermodynamics \cite{Sekimoto.2010,Seifert.2012.RPP,Vinjanampathy.2016.CP,Goold.2016.JPA,Deffner.2019}.

\begin{figure}[b]
\centering
\includegraphics[width=1\linewidth]{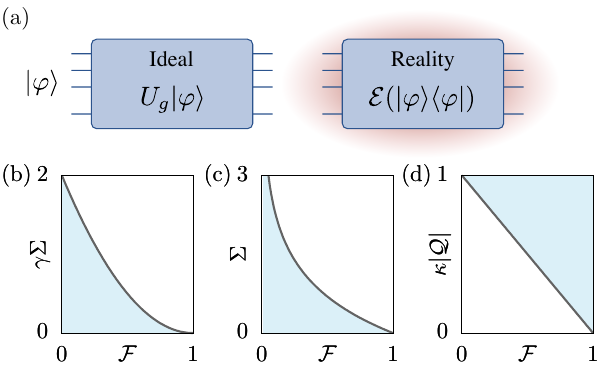}
\protect\caption{(a) The top panel is the schematic of a generic quantum gate acting on qubits. 
Ideally, the quantum gate is characterized by a unitary operator $U_g$ that transforms an input pure state $\ket{\varphi}$ into $U_g\ket{\varphi}$. 
In reality, however, the dynamics of the quantum gate is affected by the dissipative environment and becomes nonunitary, which is characterized by a completely-positive trace preserving map $\mca{E}(\cdot)$. 
The bottom panel depicts fundamental relations between fidelity and dissipation in the quantum gate that undergoes (b) an arbitrary time-dependent driving, (c) a thermal relaxation process, and (d) a general thermodynamic process. 
The light blue areas indicate infeasible regions wherein these quantities cannot be attained simultaneously.}\label{fig:Cover}
\end{figure}

Quantum gate operations are basic components in quantum information processing. 
In the pursuit of quantum computing, the availability of high-fidelity quantum gates is indispensable. 
However, it is important to acknowledge that quantum gates, by their nature, cannot be perfectly shielded from interactions with their dissipative environment \cite{Knill.2005.N}. 
Consequently, quantum decoherence inevitably arises, resulting in a degradation of fidelity. 
To tackle this challenge, numerous engineering proposals have been reported in recent years, aiming at implementing quantum gates with high speed and fidelity \cite{Gaebler.2016.PRL,Ballance.2016.PRL,Huang.2019.PRL,Hegde.2022.PRL,Erickson.2022.PRL,Xie.2023.PRL,Leu.2023.PRL}. 
Nevertheless, it remains crucial, at a fundamental level, to understand the impact of environment-induced decoherence on quantum gate fidelity from a thermodynamic standpoint. 
This direction bears significance not only for the ultimate objective of establishing fault-tolerant quantum computing but also for elucidating the energetic costs associated with quantum computing. 
To date, several studies have explored gate fidelity under specific scenarios such as in the presence of conservative laws \cite{Ozawa.2002.PRL}, short operational times \cite{Abad.2022.PRL}, imperfect timekeeping \cite{Xuereb.2023.PRL}, and fluctuations of external control \cite{Jiang.2023.PRA}. 
Nevertheless, the intimate relationship between fidelity and dissipation that occurs during computation remains veiled. 
To achieve a deeper comprehension of quantum computing, it is relevant to quantitatively uncover thermodynamic effects on gate fidelity.

In the present study, we utilize the framework of quantum thermodynamics to establish fundamental relations between the average fidelity and dissipation inherent to generic quantum gates (see Fig.~\ref{fig:Cover} for illustration). 
Assuming that the dissipative dynamics of quantum gates is governed by the Markovian master equation, we prove that the sum of the average fidelity and the square root of dissipation always equals or exceeds one [cf.~Eq.~\eqref{eq:main.res}]. 
This finding qualitatively implies that these quantities cannot simultaneously be small, and this holds true for arbitrary time-dependent driving and operational durations. 
Notably, in scenarios where the quantum gate undergoes thermal relaxation (i.e., when the Hamiltonian is time-independent), we derive that the product of the average fidelity and the exponential of dissipation is consistently greater than or equal to one [cf.~Eq.~\eqref{eq:main.res.therm.relax}]. 
Significantly, this relation remains valid regardless of the specific details of the dissipative structure, rendering it applicable for inferring dissipation through experimental measurements of the average fidelity. 
Beyond Markovian cases, we reveal a trade-off relation between gate fidelity and energy dissipation for general quantum gates coupled to an arbitrary environment [cf.~Eq.~\eqref{eq:FQ.tradeoff}]. 
This trade-off indicates that these two quantities cannot be large simultaneously. 
These findings yield sandwich bounds on gate fidelity, providing a comprehensive characterization of thermodynamic effects on quantum gates.

\section{Setup}
We consider a generic quantum gate acting on $N$ qubits. 
In the absence of interactions with the environment, the quantum gate can be ideally described by a unitary operator $U_g$ with the underlying Hamiltonian $H_t$; that is, $U_g=\vec{\mbb{T}}\exp\qty(-i\int_0^\tau\dd{t}H_t)$, where $\tau$ denotes the operational time and $\vec{\mbb{T}}$ denotes the time-ordering operator.
However, in reality, any physical system cannot be perfectly isolated from the environment, thus basically being an open system.
The unavoidable interactions with the dissipative environment act as noises and may degrade the fidelity of quantum gates.
Let $\mca{E}(\cdot)$ denote the completely positive trace-preserving map that represents the actual quantum gate. That is, $\varrho_\tau=\mca{E}(\varrho_0)$, where $\varrho_t$ is the density matrix of the quantum gate at time $t$.
We assume that the quantum gate is weakly coupled to the memoryless environment.
In this case, the time evolution of $\varrho_t$ can be described by the Gorini-Kossakowski-Sudarshan-Lindblad (GKSL) equation \cite{Lindblad.1976.CMP,Gorini.1976.JMP}:
\begin{equation}\label{eq:Lindblad.eq}
\dot\varrho_t=-i[H_t,\varrho_t]+\sum_c\mca{D}[L_c(t)]\varrho_t,
\end{equation}
where $\mca{D}[L]\varrho\coloneqq L\varrho L^\dagger-\qty{L^\dagger L,\varrho}/2$ is the dissipator and $\{L_c\}_c$ denote the jump operators, which characterize the decoherence effect of the environment on the system.
The jump operators, in general, can be classified into two types: dissipative and non-dissipative.
Examples of the first type of operators include relaxation and excitation, which occur due to energy exchange with the environment.
Such dissipative jumps significantly contribute to the irreversibility of the computing process.
The second one encompasses phase damping, which causes the loss of quantum information without the energy loss.
Throughout this study, we consider \emph{both} types of dissipative and non-dissipative jump operators.
To guarantee the thermodynamical consistency of the quantum gate, we assume the local detailed balance for dissipative jump operators \cite{Horowitz.2013.NJP,Manzano.2018.PRX}; that is, they come in pairs $(c,c')$ such that 
\begin{equation}\label{eq:ldbc}
	L_c(t)=e^{s_c(t)/2}L_{c'}(t)^\dagger,
\end{equation}
where $s_c(t)=-s_{c'}(t)$ denotes the environmental entropy change associated with the $c$th jump.
In addition, non-dissipative operators are assumed to be Hermitian, which can also be cast to Eq.~\eqref{eq:ldbc} by setting $c'=c$ and $s_c(t)=0$.
Note that the jump operators are not specified and can have arbitrary forms.
Various noises, such as phase damping, amplitude damping, and crosstalk between qubits, can be included into these two classes of jump operators.

The fidelity of the actual quantum gate is an indicator of significant interest.
For each initial pure state $\varphi=\dyad{\varphi}$, the output states of the ideal and actual quantum gates are $U_g{\varphi}U_g^\dagger$ and $\mca{E}(\varphi)$, respectively.
Using the fidelity between these output states, the average gate fidelity of the quantum operation is given by \cite{Nielsen.2002.PLA}
\begin{equation}
\mca{F}\coloneqq\int\dd{\varphi}\mel{\varphi}{U_g^\dagger\mca{E}(\varphi)U_g}{\varphi},
\end{equation}
where the integral is over all pure states and $d\varphi$ denotes the normalized Haar measure.
The quantity $\mca{F}$ quantifies how accurately the channel $\mca{E}$ approximates the quantum gate $U_g$.
When the map $\mca{E}$ perfectly implements the gate operation, $\mca{F}=1$. In general, we have $\mca{F}\le 1$.

Another relevant quantity is dissipation, which reflects the irreversible nature of the quantum gate.
Using the framework of quantum thermodynamics, dissipation in open quantum systems can be quantified by irreversible entropy production \cite{Landi.2021.RMP}.
For the initial state $\varrho_0={\varphi}$, the associated entropy production can be defined as the sum of entropy changes in the system and environment as
\begin{align}
\Sigma_\varphi(\tau)\coloneqq \Sigma_\varphi^{\rm sys}(\tau) + \Sigma_\varphi^{\rm env}(\tau).
\end{align}
Here, the first term $\Sigma_\varphi^{\rm sys}(\tau)\coloneqq S(\varrho_\tau)-S(\varrho_0)$ is the system's entropy production quantified by the von Neumann entropy $S(\varrho)\coloneqq-\tr{\varrho\ln\varrho}$ and the second term $\Sigma_\varphi^{\rm env}(\tau)\coloneqq\int_0^\tau\dd{t}\sum_{c}\tr{L_c(t)\varrho_tL_c(t)^\dagger}s_c(t)$ characterizes the amount of heat dissipated into the environment.
We can prove that $\Sigma_\varphi(\tau)\ge 0$ for arbitrary initial states, which corresponds to the second law of thermodynamics.
In analogy to the gate fidelity, the average dissipation over all pure states can be defined as
\begin{equation}
\Sigma\coloneqq\int\dd{\varphi}\Sigma_\varphi(\tau).
\end{equation}
In this work, we aim to explore fundamental relations between fidelity $\mca{F}$ and dissipation terms such as $\Sigma$ for arbitrary operational time $\tau$.

\section{Main results}
Within the above setup, we are now going to explain our results, which unveil the intimate relationships between the fidelity and dissipation of the quantum gate.
First, we prove that for an arbitrary control protocol, the average gate fidelity and the average dissipation always obey the following relation:
\begin{equation}\label{eq:main.res}
\mca{F}+\sqrt{\gamma\Sigma/2}\ge 1.
\end{equation}
Here, $\gamma\coloneqq\int_0^\tau\dd{t}\sum_{c}[\Delta L_c(t)]^2$ is a quantity that is determined solely by the jump operators, where $[\Delta L]^2\coloneqq(d\tr{L^\dagger L} - |\tr L|^2)/d(d+1)\ge 0$ and $d=2^N$ being the dimension of Hilbert space.
The inequality \eqref{eq:main.res} is our first main result, which holds true for arbitrary operational times.
Its proof is presented in Appendix \ref{app:proof.res.1}.
While qubits quantum gates are considered here, the generalization to qudits systems is straightforward.

Some remarks on the relation \eqref{eq:main.res} are in order.
First, it expresses a quantitative relationship between the average fidelity $\mca{F}$ and the thermodynamic cost $\Sigma$, providing insights into the extent to which dissipation can impact the gate fidelity.
Furthermore, it can also be regarded as a thermodynamic upper bound on the gate error.
Second, this relation applies to arbitrary quantum gates as it does not explicitly depend on the form of the functional unitary $U_g$ but only jump operators $\{L_c\}$.
Third, a practical advantage of this bound is that it allows for the estimation of dissipation by examining the rates of decoherence and dephasing.
To demonstrate this, let us consider a typical example, in which the operational time $\tau$ is significantly short (i.e., $\tau\ll 1$) and the quantum gate is affected by Pauli noises (i.e., $L_c\in\sqrt{\Gamma_c}\{\mds{1},\sigma_x,\sigma_y,\sigma_z\}^{\otimes N}$, where $\tr L_c=0$ and $\Gamma_c$ denotes the jump rate).
In this case, the average fidelity can be calculated as $\mca{F}=1-\gamma+\mca{O}(\Gamma^2\tau^2)$ and $\gamma=\tau \Gamma d/(d+1)$, where $\Gamma\coloneqq\sum_c\Gamma_c$ is the total dephasing rate.
By applying the relation \eqref{eq:main.res}, a simple lower bound on dissipation can be obtained as
\begin{equation}
	\Sigma\ge\frac{2d}{d+1}\Gamma\tau+\mca{O}(\Gamma^2\tau^2).
\end{equation}
Note that the decay rate $\Gamma$ can be, in principle, estimated in experiments \cite{Harper.2020.NP,Harper.2021.PRXQ,Flammia.2021.Q}.
It is worth discussing relevant \emph{indirect} methods for inferring dissipation. In classical systems, a method based on the thermodynamic uncertainty relation is often employed \cite{Barato.2015.PRL,Koyuk.2020.PRL,Manikandan.2020.PRL,Vu.2020.PRE}. 
For open quantum systems, the findings presented in Ref.~\cite{Vu.2022.PRL.TUR} can establish a lower bound for entropy production, which is expressed as $\Sigma\ge 2\int\dd{\varphi}(\Psi_J-Q)$.
Here, $\Psi_J$ represents a statistical value of current $J$ defined over a stochastic trajectory and $Q$ denotes a quantum contribution. 
However, accurately computing $Q$ requires a detailed understanding of the underlying dynamics, which may be challenging to achieve in practical experiments \cite{fnt1}. 
Additionally, this lower bound can become negative if the quantum contribution exceeds the statistical value of the current.
Fourth, while the Hamiltonian $H_t$ is implicitly assumed to be perfectly implemented without any errors, the result can be readily generalized to encompass the case of imperfect implementations.
In practice, a desired Hamiltonian may not be exactly built due to the limitations of controls or systematic errors.
For example, they include quantum crosstalk noises such as local $ZZ$ interactions between qubits in the Hamiltonian \cite{Zhou.2023.PRL}.
Given the implemented Hamiltonian $\hat H_t$, which possibly different from the ideal $H_t$, we can obtain the following generalized relation (see Appendix \ref{app:proof.res.gen} for the proof):
\begin{equation}\label{eq:main.res.gen}
	\mca{F}+\sqrt{\gamma\Sigma/2}\ge\frac{|\tr{\hat U_\tau^\dagger U_g}|^2+d}{d(d+1)},
\end{equation}
where $\hat U_\tau\coloneqq\vec{\mbb{T}}\exp(-i\int_0^\tau\dd{t}\hat H_t)$.
For the perfect case (i.e., when $\hat H_t=H_t$), $\tr{\hat U_\tau^\dagger U_g}=d$ and the relation \eqref{eq:main.res} is immediately recovered.
Last, we emphasize that the local detailed balance condition is the minimal requirement for discussing the thermodynamics of fidelity. When this condition is excluded, a trade-off relation between fidelity and activity (which is not a thermodynamic quantity) can still be derived (see Appendix \ref{app:proof.fa} for details).

Next, we consider the case where the quantum gate undergoes a thermal relaxation process.
That is, both the underlying Hamiltonian and the jump operators are time-independent, and each jump operator accounts for a transition between energy eigenstates of the Hamiltonian.
It is noteworthy that constructing a unitary gate with a time-independent Hamiltonian provides several advantages, which can simultaneously enhance computational speed and accuracy. 
Specifically, this approach not only effectively reduces noise from external control but also ensures a more reliable implementation by eliminating concerns about the complex dynamics required to implement a given gate \cite{Innocenti.2020.NJP}. 
For this relevant class of quantum gates, we rigorously obtain the following inequality:
\begin{equation}\label{eq:main.res.therm.relax}
	\mca{F}e^{\Sigma}\ge 1,
\end{equation}
whose proof is presented in Appendix \ref{app:proof.res.ther.rel}.
While the applicable range of this compact relation is more limited compared to the relation \eqref{eq:main.res}, it has its own strength.
That is, it involves only fidelity and dissipation, and is independent of the specific details of the Hamiltonian and jump operators.
Consequently, it offers a promising approach for estimating the dissipation of quantum gates using experimentally measurable fidelity.

Thus far, we have uncovered thermodynamic constraints for quantum gates, establishing lower bounds on gate fidelity.
In what follows, we extend our analysis to encompass thermodynamic upper bounds on gate fidelity in a broader context.

\sectionprl{Trade-off between fidelity and energy dissipation}We consider a general case beyond Markovian environments.
That is, the quantum gate is coupled to an arbitrary environment, and the composite system evolves under a unitary operator.
Specifically, the actual quantum gate is characterized by the map $\mca{E}(\cdot)=\tr_E\{U(\cdot\otimes\varrho_E)U^\dagger\}$, where $U$ is a unitary operator and $\varrho_E$ is an arbitrary initial state of the environment.
We investigate the case of a time-independent Hamiltonian (i.e., $H_t=H$ for all $t$) and arbitrary interactions with the environment; the generalization to the time-dependent case is straightforward (Appendix \ref{app:proof.ener.diss}).
In this case, the average energy change in the quantum gate can be defined as
\begin{equation}
	\mca{Q}\coloneqq\int\dd{\varphi}\tr{H(\varrho_\tau-\varrho_0)}.
\end{equation}
When the unitary operator $U$ conserves the total energy, $\mca{Q}$ is equivalent to the average heat dissipated into the environment.
Notice that $\mca{Q}$ vanishes if the gate is isolated from the environment.
For this general setting, we find a trade-off relation between the gate fidelity and the thermodynamic quantity $\mca{Q}$, given by (see Appendix \ref{app:proof.ener.diss} for the proof)
\begin{equation}\label{eq:FQ.tradeoff}
	\mca{F}+\kappa|\mca{Q}|\le 1,
\end{equation}
where $\kappa\coloneqq{d}/{(d+1)g}$ and $g$ denotes the energy bandwidth of the gate Hamiltonian $H$.
This result provides a quantitative implication that energy dissipation must be suppressed to achieve high fidelity.
Additionally, the relation \eqref{eq:FQ.tradeoff} can be interpreted either as an upper bound on gate fidelity in terms of energy dissipation or as an upper bound on energy dissipation in terms of gate fidelity.

\section{Numerical demonstration}
We exemplify our results in single-qubit and controlled-$Z$ (CZ) quantum gates, which are universal for quantum computation \cite{Nielsen.2000} and have been experimentally implemented on various platforms.

\begin{figure}[t]
\centering
\includegraphics[width=1\linewidth]{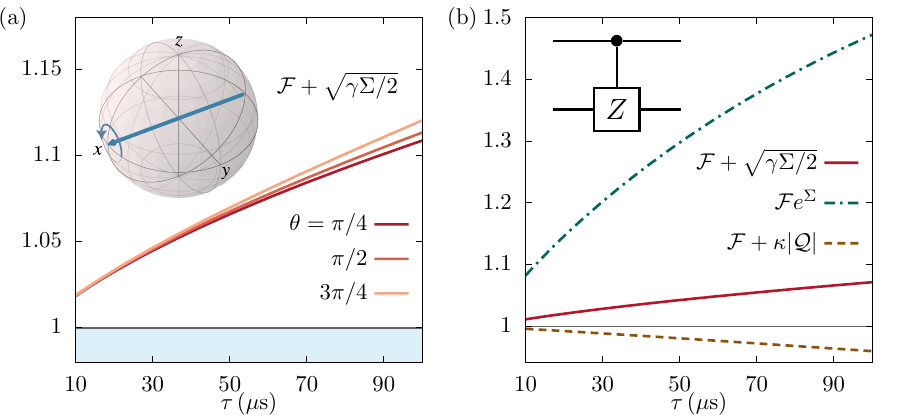}
\protect\caption{Numerical illustration of the relations \eqref{eq:main.res}, \eqref{eq:main.res.therm.relax}, and \eqref{eq:FQ.tradeoff} in (a) the $X_\theta$ gate and (b) the CZ gate. Time $\tau$ is varied in range $[10,100]\,\mu$s while the other parameters are set as $\Omega=5$\,GHz, $T=20$\,mK, and $\Gamma_a=10^{-6}$. The constant rate $\Gamma_p$ is determined such that the longitudinal relaxation time $T_1$ and the transverse relaxation time $T_2$ satisfy $T_1=\sqrt{2}T_2$.}\label{fig:Res}
\end{figure}

We first demonstrate the result \eqref{eq:main.res} using single-qubit gates.
The Hamiltonian is given by $H_t=\omega_t(\sigma_x\cos\phi_t+\sigma_y\sin\phi_t)/2$, where $\omega_t$ and $\phi_t$ are control parameters.
Through proper manipulations of $\omega_t$ and $\phi_t$, rotations about the $x$- and $y$-axis can be implemented, which are sufficient to build any single-qubit unitary operator up to a global phase \cite{Nielsen.2000}.
We particularly examine the $X_\theta$ gate, which rotates the qubit state by $\theta$ radians about the $x$-axis.
This gate can be implemented by the following time-dependent protocol: $\omega_t=(\theta/2\tau)[1+2\sin(\pi t/2\tau)^2]$ and $\phi_t=0$.
Assume that the quantum gate is weakly coupled to the environment at inverse temperature $\beta$ and subject to the thermal noise characterized by two jump operators $L_1=\sqrt{\Gamma_a\Omega(\bar{n}+1)}\sigma_-$ and $L_2=\sqrt{\Gamma_a\Omega\bar{n}}\sigma_+$.
Here, $\Gamma_a$ and $\Omega$ denote the coupling strength and the qubit energy gap, respectively, and $\bar{n}=1/(e^{\hbar\beta\Omega}-1)$.
In addition, the quantum gate is also affected by the dephasing noise, characterized by operator $L_3=\sqrt{\Gamma_p/2}\sigma_z$.
For each $\theta\in\{\pi/4,\pi/2,3\pi/4\}$, we vary the operational time $\tau$ and calculate the average fidelity and dissipation for each instant of time.
As seen in Fig.~\ref{fig:Res}(a), the lines representing the fidelity-dissipation relation \eqref{eq:main.res} lie above the constant line of one, thereby numerically confirming the validity of the bound.

Next, we examine the two-qubit CZ gate, which can be realized using the time-independent Hamiltonian $H=(\omega/2)(\sigma_z\otimes\mds{1}+\mds{1}\otimes\sigma_z-\sigma_z\otimes\sigma_z)$, where $\omega=\pi/2\tau$ \cite{Krantz.2019.APR}.
Besides the local dephasing, we also assume that the gate is influenced by correlated noises, characterized by two jump operators $L_1\propto\sigma_-\otimes\sigma_-$ and $L_2\propto\sigma_+\otimes\sigma_+$.
In this case, the quantum gate simply undergoes a thermal relaxation process; therefore, all the relations \eqref{eq:main.res}, \eqref{eq:main.res.therm.relax}, and \eqref{eq:FQ.tradeoff} are applicable.
We analogously alter the operational time and plot the relevant quantities in Fig.~\ref{fig:Res}(b).
It can be numerically verified that the derived relations hold true for all times and can be saturated in the $\tau\to 0$ limit.

\section{Summary and outlook}
In this study, we established the fundamental relations between the average fidelity and dissipation for generic quantum gates. 
These relations serve a dual purpose: they quantify the inherent trade-off between fidelity and dissipation while also offering valuable tools for estimating the thermodynamic cost of quantum computing. 
Furthermore, they quantitatively reveal the performance limitations caused by dissipation and provide guidance for designing optimal quantum circuits from a thermodynamic standpoint. 
Since our relations hold true at the level of a single initial pure state, they can be easily extended to other scenarios. 
For instance, the effect of imperfect timekeeping \cite{Xuereb.2023.PRL} can be incorporated by considering ensemble averages over the distribution of operational times.

We discuss several possible extensions and research directions that our study can motivate. 
One feasible extension involves incorporating the readout process into quantum circuits \cite{Wiseman.2009}. 
Since such processes cause energy exchange between the circuit and controller, this investigation would provide a more comprehensive understanding of the relationship between dissipation and computation reliability. 
Beyond examining the impact of dissipation on fidelity, it is essential to consider other thermodynamic aspects. 
One critical aspect of quantum circuits is information scrambling \cite{Nahum.2018.PRX,Landsman.2019.N,Mi.2021.S}, which quantifies the rate at which local information spreads. 
This concept has received extensive attention in various contexts, such as thermalization \cite{Rigol.2008.N}, chaos \cite{Maldacena.2016.JHEP}, many-body localization \cite{Fan.2017.SB}, and black-hole physics \cite{Shenker.2014.JHEP}. 
Recently, numerical studies have shown that information spreading can be suffered by dissipation \cite{Zhang.2019.PRB}. 
Therefore, elucidating analytical relations between dissipation and spreading measures, like the out-of-time-order correlators, is of fundamental importance. 
The techniques developed in our study may prove helpful in exploring this connection. 
Another intriguing avenue of research is exploring the interplay between thermodynamics and complexity in quantum circuits. 
Quantum Wasserstein distances, which have been demonstrated to play critical roles in both of these domains \cite{Vu.2021.PRL,DePalma.2021.TIT,Li.2022.arxiv,Vu.2023.PRL.TSL,Vu.2023.PRX}, could potentially offer an approach to address this matter.

\begin{acknowledgments}
The authors thank Eisuke Abe and Yutaka Tabuchi for the fruitful discussion.
{T.V.V.}~was supported by JSPS KAKENHI Grant No.~JP23K13032.
{T.K.}~acknowledges Hakubi projects of RIKEN and was supported by JST, PRESTO Grant No.~JPMJPR2116.
{K.S.}~was supported by JSPS KAKENHI Grant No.~JP23K25796.
\end{acknowledgments}

\appendix

\section{Proof of the fidelity-dissipation relation \eqref{eq:main.res}}\label{app:proof.res.1}

\subsection{Interaction picture}
We consider the interaction picture to derive the relation.
Define $U_t\coloneqq\vec{\mbb{T}}\exp\qty(-i\int_0^t\dd{s}H_s)$, where $\vec{\mbb{T}}$ denotes the time-ordering operator.
This time-dependent operator satisfies the following differential equation:
\begin{equation}
\dot{U}_t=-iH_tU_t,
\end{equation}
where $U_0=\mds{1}$.
For each time-dependent operator $X_t$, we also define the corresponding operator in the interaction picture, $\tilde{X}_t\coloneqq U_t^\dagger X_tU_t$.

In what follows, we show that the density matrix $\tilde{\varrho}_t$ obeys the following GKSL master equation \cite{Vu.2021.PRL}:
\begin{equation}
\dot{\tilde{\varrho}}_t=\sum_c\mca{D}[\tilde{L}_c(t)]\tilde{\varrho}_t.\label{eq:Lind.eq.int.pic}
\end{equation}
Taking the time derivative of $\tilde{\varrho}_t$, Eq.~\eqref{eq:Lind.eq.int.pic} can be obtained as
\begin{equation}
\begin{aligned}[b]
\dot{\tilde{\varrho}}_t&=U_t^\dagger i[H_t,\varrho_t]U_t+U_t^\dagger\dot{\varrho}_tU_t\\
&=U_t^\dagger\sum_c\qty[L_c(t)\varrho_tL_c(t)^\dagger-\frac{1}{2}\{L_c(t)^\dagger L_c(t),\varrho_t\}]U_t\\
&=\sum_c\tilde{L}_c(t)\tilde{\varrho}_t\tilde{L}_c(t)^\dagger-\frac{1}{2}\{\tilde{L}_c(t)^\dagger \tilde{L}_c(t),\tilde{\varrho}_t\}\\
&=\sum_c\mca{D}[\tilde{L}_c(t)]\tilde{\varrho}_t.
\end{aligned}
\end{equation}
It should be noted that the dissipative jump operators in the interaction picture satisfy the local detailed balance condition:
\begin{equation}
	\tilde{L}_c(t)=e^{s_c(t)/2}\tilde{L}_{c'}(t)^\dagger.
\end{equation}

\subsection{Entropy production rate}
Next, we derive an analytical expression of the entropy production rate $\dot\Sigma_{\varphi}(t)\coloneqq (d/dt)\Sigma_{\varphi}(t)$.
Hereafter, we omit the time notation for simplicity.
Taking the time derivative of irreversible entropy production, we can calculate the entropy production rate as
\begin{align}
\dot\Sigma_{\varphi}&=-\tr{\dot\varrho\ln\varrho}+\sum_c\tr{L_c^\dagger L_c\varrho}s_c\notag\\
&=-\tr{\dot{\tilde{\varrho}}\ln\tilde{\varrho}}+\sum_c\tr{\tilde{L}_c^\dagger\tilde{L}_c\tilde{\varrho}}s_c\notag\\
&=-\sum_c\tr{(\mca{D}[\tilde{L}_c]\tilde{\varrho})\ln\tilde{\varrho}}+\tr{\tilde{L}_c^\dagger\tilde{L}_c\tilde{\varrho}}s_c\notag\\
&=\sum_c\tr{\tilde{L}_c\tilde{\varrho}(s_c\tilde{L}_c^\dagger-[\tilde{L}_c^\dagger,\ln\tilde{\varrho}])}.
\end{align}
Let $\tilde{\varrho}=\sum_np_n\dyad{n}$ be the spectral decomposition.
Define transition rates $w_{mn}^c\coloneqq|\mel{m}{\tilde{L}_c}{n}|^2$, probability currents $j_{mn}^c\coloneqq w_{mn}^cp_n-w_{nm}^{c'}p_m$, and thermodynamic forces $f_{mn}^c\coloneqq{\rm ln}(w_{mn}^cp_n/w_{nm}^{c'}p_m)$.
Notice that $w_{mn}^c=e^{s_c}w_{nm}^{c'}$.
Since $\tr A=\sum_m\mel{m}{A}{m}$ for any operator $A$, the entropy production rate can be calculated further as \cite{Vu.2024.PRA}
\begin{align}
\dot\Sigma_{\varphi}&=\sum_{m,c}\mel{m}{\tilde{L}_c\tilde{\varrho}( s_c\tilde{L}_c^\dagger-[\tilde{L}_c^\dagger,\ln\tilde{\varrho}])}{m}\notag\\
&=\sum_{n,m,c}w_{mn}^cp_n\qty( s_c+\ln\frac{p_n}{p_m})\notag\\
&=\frac{1}{2}\sum_{n,m,c}j_{mn}^{c}\ln\frac{w_{mn}^cp_n}{w_{nm}^{c'}p_m}\notag\\
&=\frac{1}{2}\sum_{n,m,c}e^{-s_c/2}w_{mn}^c(f_{mn}^c)^2\Phi\qty(e^{s_c/2}p_n,e^{-s_c/2}p_m),\label{eq:ent.prod.rate}
\end{align}
where $\Phi(x,y)=(x-y)/\ln(x/y)$ is the logarithmic mean of $x$ and $y$.

\subsection{Proof}
We are now ready to prove the fidelity-dissipation relation \eqref{eq:main.res}. Noting that $U_g=U_\tau$ and $\tilde{\varrho}_0=\dyad{\varphi}$, we can calculate the fidelity between the final states of the ideal and actual quantum gates as follows:
\begin{align}
	\mca{F}_\varphi(\tau)&\coloneqq\mel{\varphi}{U_g^\dagger\varrho_\tau U_g}{\varphi}\notag\\
	&=\mel{\varphi}{\tilde{\varrho}_\tau}{\varphi}\notag\\
	&=1+\int_0^\tau\dd{t}\mel{\varphi}{\dot{\tilde{\varrho}}_t}{\varphi}.\label{eq:fidelity.int}
\end{align}
Note that $f_{mn}^c=-f_{nm}^{c'}$.
Expressing the initial pure state in terms of eigenbasis $\{\ket{n}\}$, i.e., $\ket{\varphi}=\sum_nz_n\ket{n}$, we can calculate the term in the integral of Eq.~\eqref{eq:fidelity.int} as
\begin{widetext}
\begin{align}
	\mel{\varphi}{\dot{\tilde{\varrho}}}{\varphi}&=\mel{\varphi}{\sum_c\mca{D}[\tilde{L}_c]\tilde{\varrho}}{\varphi}\notag\\
	&=\mel{\varphi}{\sum_c\qty[ \tilde{L}_c\tilde{\varrho}\tilde{L}_c^\dagger - \{\tilde{L}_c^\dagger \tilde{L}_c,\tilde{\varrho}\}/2]}{\varphi}\notag\\
	&=\sum_{n,m,k,c}z_m^*z_kp_n\mel{m}{\tilde{L}_c}{n}\mel{n}{\tilde{L}_c^\dagger}{k}-\frac{1}{2}z_m^*z_kp_k\mel{m}{\tilde{L}_c^\dagger}{n}\mel{n}{\tilde{L}_c}{k}-\frac{1}{2}z_m^*z_kp_m\mel{m}{\tilde{L}_c^\dagger}{n}\mel{n}{\tilde{L}_c}{k}\notag\\
	&=\sum_{n,m,k,c}z_m^*z_kp_n\mel{m}{\tilde{L}_c}{n}\mel{n}{\tilde{L}_c^\dagger}{k}-\frac{1}{2}e^{-s_c}z_m^*z_kp_k\mel{m}{\tilde{L}_c}{n}\mel{n}{\tilde{L}_c^\dagger}{k}-\frac{1}{2}e^{-s_c}z_m^*z_kp_m\mel{m}{\tilde{L}_c}{n}\mel{n}{\tilde{L}_c^\dagger}{k}\notag\\
	&=\frac{1}{2}\sum_{n,m,k,c}z_m^*z_k\mel{m}{\tilde{L}_c}{n}\mel{n}{\tilde{L}_c^\dagger}{k}\qty[2p_n-e^{-s_c}(p_k+p_m)]\notag\\
	&=\frac{1}{2}\sum_{n,m,k,c}e^{-s_c/2}z_m^*z_k\mel{m}{\tilde{L}_c}{n}\mel{n}{\tilde{L}_c^\dagger}{k}\qty[f_{kn}^c\Phi\qty(e^{s_c/2}p_n,e^{-s_c/2}p_k)+f_{mn}^c\Phi\qty(e^{s_c/2}p_n,e^{-s_c/2}p_m)]\notag\\
	&=\frac{1}{2}\sum_{n,m,k,c}f_{mn}^c\Phi\qty(e^{s_c/2}p_n,e^{-s_c/2}p_m)\qty[ e^{-s_c/2}z_m^*z_k\mel{m}{\tilde{L}_c}{n}\mel{n}{\tilde{L}_c^\dagger}{k} - e^{-s_{c'}/2}z_k^*z_n\mel{k}{\tilde{L}_{c'}}{m}\mel{m}{\tilde{L}_{c'}^\dagger}{n} ]\notag\\
	&=\frac{1}{2}\sum_{n,m,k,c}f_{mn}^c\Phi\qty(e^{s_c/2}p_n,e^{-s_c/2}p_m)\qty[ e^{-s_c/2}z_m^*z_k\mel{m}{\tilde{L}_c}{n}\mel{n}{\tilde{L}_c^\dagger}{k} - e^{-s_c/2} z_k^*z_n\mel{k}{\tilde{L}_c^\dagger}{m}\mel{m}{\tilde{L}_c}{n} ]\notag\\
	&=\frac{1}{2}\sum_{n,m,c}e^{-s_c/2}\mel{m}{\tilde{L}_c}{n}f_{mn}^c\Phi\qty(e^{s_c/2}p_n,e^{-s_c/2}p_m)\sum_k\qty[ z_m^*z_k\mel{n}{\tilde{L}_c^\dagger}{k} - z_k^*z_n\mel{k}{\tilde{L}_c^\dagger}{m} ]\notag\\
	&=\frac{1}{2}\sum_{n,m,c}e^{-s_c/2}\mel{m}{\tilde{L}_c}{n}f_{mn}^c\Phi\qty(e^{s_c/2}p_n,e^{-s_c/2}p_m)\mel{n}{[\tilde{L}_c^\dagger,\dyad{\varphi}]}{m}.\label{eq:fidelity.tmp0}
\end{align}
Here, we perform the index swap $k\to n$, $n\to m$, $m\to k$, and $c\to c'$ for the term
\begin{equation}
	e^{-s_c/2}z_m^*z_k\mel{m}{\tilde{L}_c}{n}\mel{n}{\tilde{L}_c^\dagger}{k}f_{kn}^c\Phi\qty(e^{s_c/2}p_n,e^{-s_c/2}p_k)
\end{equation}
in the sixth line in to obtain the seventh line.
Applying the Cauchy-Schwarz inequality to Eq.~\eqref{eq:fidelity.tmp0}, we obtain
\begin{align}
	\qty|\mel{\varphi}{\dot{\tilde{\varrho}}}{\varphi}|^2&\le\frac{1}{4}\sum_{n,m,c}e^{-s_c/2}w_{mn}^c(f_{mn}^c)^2\Phi\qty(e^{s_c/2}p_n,e^{-s_c/2}p_m)\sum_{n,m,c}e^{-s_c/2}\Phi\qty(e^{s_c/2}p_n,e^{-s_c/2}p_m)\qty|\mel{n}{[\tilde{L}_c^\dagger,\dyad{\varphi}]}{m}|^2\notag\\
	&=\frac{1}{2}\dot\Sigma_\varphi\times\sum_{n,m,c}e^{-s_c/2}\Phi\qty(e^{s_c/2}p_n,e^{-s_c/2}p_m)\qty| \mel{n}{[\tilde{L}_c^\dagger,\dyad{\varphi}]}{m} |^2.\label{eq:fidelity.tmp1}
\end{align}
By applying the inequalities $\Phi(x,y)\le(x+y)/2$ and $|\tr{AB}|\le\|A\|_\infty\|B\|_1$, the second term in Eq.~\eqref{eq:fidelity.tmp1} can be upper bounded as
\begin{align}
	&\sum_{n,m,c}e^{-s_c/2}\Phi\qty(e^{s_c/2}p_n,e^{-s_c/2}p_m)\qty| \mel{n}{[\tilde{L}_c^\dagger,\dyad{\varphi}]}{m} |^2\notag\\
	&\le \frac{1}{2}\sum_{n,m,c}e^{-s_c/2}\qty(e^{s_c/2}p_n+e^{-s_c/2}p_m)\qty| \mel{n}{[\tilde{L}_c^\dagger,\dyad{\varphi}]}{m} |^2\notag\\
	&=\frac{1}{2}\sum_{n,m,c}p_n\qty| \mel{n}{[\tilde{L}_c^\dagger,\dyad{\varphi}]}{m} |^2+\frac{1}{2}\sum_{n,m,c}p_m\qty| \mel{n}{[\tilde{L}_{c'},\dyad{\varphi}]}{m} |^2\notag\\
	&=\frac{1}{2}\sum_{c}\tr{[\tilde{L}_c^\dagger,\dyad{\varphi}][\tilde{L}_c^\dagger,\dyad{\varphi}]^\dagger\tilde\varrho}+\frac{1}{2}\sum_{c}\tr{[\tilde{L}_{c'},\dyad{\varphi}]^\dagger[\tilde{L}_{c'},\dyad{\varphi}]\tilde\varrho}\notag\\
	&\le\frac{1}{2}\sum_{c}\|[\tilde{L}_c^\dagger,\dyad{\varphi}][\tilde{L}_c^\dagger,\dyad{\varphi}]^\dagger\|_\infty+\frac{1}{2}\sum_{c}\|[\tilde{L}_{c'},\dyad{\varphi}]^\dagger[\tilde{L}_{c'},\dyad{\varphi}]\|_\infty\notag\\
	&=\sum_{c}\|[\tilde{L}_c,\dyad{\varphi}]\|_\infty^2\notag\\
	&=\sum_{c}(\delta_\varphi \tilde{L}_c)^2,\label{eq:fidelity.tmp2}
\end{align}
\end{widetext}
where we define $\delta_\varphi L\coloneqq\sqrt{\mel{\varphi}{L^\dagger L}{\varphi}-|\mel{\varphi}{L}{\varphi}|^2}$ and apply Lemma \ref{lem:op.norm} to obtain the last line.
By combining Eqs.~\eqref{eq:fidelity.tmp1} and \eqref{eq:fidelity.tmp2}, we arrive at the following inequality:
\begin{align}
	[1-\mca{F}_\varphi(\tau)]^2&=\qty(\int_0^\tau\dd{t}\mel{\varphi}{\dot{\tilde{\varrho}}_t}{\varphi})^2\notag\\
	&\le \qty(\int_0^\tau\dd{t}\sqrt{\dot\Sigma_\varphi(t)\sum_c[\delta_\varphi \tilde{L}_c(t)]^2/2})^2\notag\\
	&\le \frac{1}{2}\int_0^\tau\dd{t}\sum_c[\delta_\varphi \tilde{L}_c(t)]^2\int_0^\tau\dd{t}\dot\Sigma_\varphi(t)\notag\\
	&= \gamma_\varphi\Sigma_\varphi(\tau)/2,
\end{align}
where we define $\gamma_\varphi\coloneqq\int_0^\tau\dd{t}\sum_c[\delta_\varphi \tilde{L}_c(t)]^2$.
Consequently, we obtain the fidelity-dissipation relation for an initial pure state as follows:
\begin{equation}
	\mca{F}_\varphi(\tau)+\sqrt{\gamma_\varphi\Sigma_\varphi(\tau)/2}\ge 1.
\end{equation}
Taking the ensemble average over all pure states, we obtain the desired relation \eqref{eq:main.res} as
\begin{align}
	1&\le \int\dd{\varphi}\qty[\mca{F}_\varphi(\tau)+\sqrt{\gamma_\varphi\Sigma_\varphi(\tau)/2}]\notag\\
	&\le {\mca{F}} + \sqrt{(1/2)\int\dd{\varphi}\gamma_\varphi\int\dd{\varphi}\Sigma_\varphi(\tau)}\notag\\
	&= {\mca{F}} + \sqrt{\gamma{\Sigma}/2},
\end{align}
where $\gamma$ is defined as $\gamma\coloneqq\int\dd{\varphi}\gamma_\varphi$.
Using Lemma \ref{lem:delta.L}, $\gamma$ can be explicitly calculated as
\begin{align}
	\gamma&=\frac{1}{d(d+1)}\sum_c\int_0^\tau\dd{t}\qty(d\tr{\tilde{L}_c(t)^\dagger \tilde{L}_c(t)} - |\tr \tilde{L}_c(t)|^2)\notag\\
	&=\frac{1}{d(d+1)}\sum_c\int_0^\tau\dd{t}\qty(d\tr{L_c(t)^\dagger L_c(t)} - |\tr L_c(t)|^2)\notag\\
	&=\int_0^\tau\dd{t}\sum_c[\Delta L_c(t)]^2,
\end{align}
where $[\Delta L]^2\coloneqq(d\tr{L^\dagger L} - |\tr L|^2)/d(d+1)\ge 0$.

\begin{lemma}\label{lem:op.norm}
For an arbitrary matrix $L$ and pure state $\ket{\varphi}$, the following relation always holds:
\begin{equation}\label{eq:lem1}
	\|[L,\dyad{\varphi}]\|_\infty=\sqrt{\mel{\varphi}{L^\dagger L}{\varphi}-|\mel{\varphi}{L}{\varphi}|^2}\eqqcolon \delta_\varphi L\le\|L\|_\infty.
\end{equation}
\end{lemma}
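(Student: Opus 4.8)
The plan is to reduce the claimed identity to a rank-two singular-value computation. Writing $P\coloneqq\dyad{\varphi}$, the commutator is $[L,P]=L\ket{\varphi}\bra{\varphi}-\ket{\varphi}\bra{\varphi}L$, an operator of rank at most two, so $\|[L,P]\|_\infty$ is just the larger nonzero singular value, and everything lives on the small subspace spanned by $\ket{\varphi}$, $L\ket{\varphi}$, and $L^\dagger\ket{\varphi}$. The natural coordinates are obtained by splitting off the $\ket{\varphi}$-components: I would write $L\ket{\varphi}=\alpha\ket{\varphi}+\delta_\varphi L\,\ket{\varphi_\perp}$, where $\alpha\coloneqq\mel{\varphi}{L}{\varphi}$ and $\ket{\varphi_\perp}$ is a unit vector orthogonal to $\ket{\varphi}$; by construction the weight of the perpendicular part is exactly $\delta_\varphi L=\sqrt{\mel{\varphi}{L^\dagger L}{\varphi}-|\alpha|^2}$. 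Substituting this (and the analogous expansion of $\bra{\varphi}L$) into $[L,P]$, the diagonal $\dyad{\varphi}$ pieces cancel, leaving a manifestly low-rank operator whose singular values I can read off.

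The first step already pins down the target value and gives one inequality for free: acting on $\ket{\varphi}$ yields $[L,P]\ket{\varphi}=(L-\alpha)\ket{\varphi}$, whose squared norm is $\mel{\varphi}{L^\dagger L}{\varphi}-|\alpha|^2=(\delta_\varphi L)^2$. Hence $\|[L,P]\|_\infty\ge\delta_\varphi L$, and the entire content of the equality is the matching upper bound $\|[L,P]\|_\infty\le\delta_\varphi L$. To attempt it I would form $[L,P]^\dagger[L,P]$, use the cancellation above to express it as a positive operator supported on a two-dimensional subspace, and bound its largest eigenvalue. The final inequality of the lemma is then immediate: since $(\delta_\varphi L)^2=\mel{\varphi}{L^\dagger L}{\varphi}-|\alpha|^2\le\mel{\varphi}{L^\dagger L}{\varphi}\le\|L^\dagger L\|_\infty=\|L\|_\infty^2$, one gets $\delta_\varphi L\le\|L\|_\infty$.

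The hard part will be precisely this upper bound. When $[L,P]^\dagger[L,P]$ is expanded, the orthogonal component of $L^\dagger\ket{\varphi}$ enters on the same footing as that of $L\ket{\varphi}$, and the top eigenvalue is naively governed by $\max(\delta_\varphi L,\delta_\varphi L^\dagger)$ with $\delta_\varphi L^\dagger=\sqrt{\mel{\varphi}{LL^\dagger}{\varphi}-|\alpha|^2}$. The crux is therefore to show that the adjoint's perpendicular weight $\delta_\varphi L^\dagger$ does not inflate the operator norm past $\delta_\varphi L$, equivalently that the maximizing input direction of $[L,P]$ is $\ket{\varphi}$ itself rather than the direction carrying $\delta_\varphi L^\dagger$. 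This is the delicate point where the precise geometry of the rank-two commutator — the interplay of $\ket{\varphi_\perp}$ with the orthogonal direction of $L^\dagger\ket{\varphi}$ — has to be exploited; the remaining manipulations are routine bookkeeping.
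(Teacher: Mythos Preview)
Your worry in the final paragraph is not a ``delicate point'' to be overcome by careful bookkeeping---it is a genuine obstruction, because the equality claimed in the lemma is \emph{false} for general $L$. Carrying your own computation one step further: with $\ket{\varphi_\perp'}$ the unit vector along $(L^\dagger-\bar\alpha)\ket{\varphi}$, one has $[L,P]\ket{\varphi}=\delta_\varphi L\,\ket{\varphi_\perp}$ and $[L,P]\ket{\varphi_\perp'}=-\delta_\varphi L^\dagger\,\ket{\varphi}$, while $[L,P]$ annihilates everything orthogonal to $\mathrm{span}\{\ket{\varphi},\ket{\varphi_\perp'}\}$. Since the two input directions are orthogonal and the two output directions are orthogonal, this \emph{is} the singular value decomposition, and
\[
[L,P]^\dagger[L,P]=(\delta_\varphi L)^2\dyad{\varphi}+(\delta_\varphi L^\dagger)^2\dyad{\varphi_\perp'},
\qquad
\|[L,P]\|_\infty=\max\bigl(\delta_\varphi L,\,\delta_\varphi L^\dagger\bigr).
\]
This equals $\delta_\varphi L$ only when $\mel{\varphi}{L^\dagger L}{\varphi}\ge\mel{\varphi}{LL^\dagger}{\varphi}$. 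A two-dimensional counterexample: take $\ket{\varphi}=\ket{0}$ and $L=2\ket{0}\!\bra{1}+\ket{1}\!\bra{0}$; then $\delta_\varphi L=1$ but $\|[L,P]\|_\infty=\delta_\varphi L^\dagger=2$.

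The paper's own proof contains the same error: it asserts $[L,P]^\dagger[L,P]=(\delta_\varphi L)^2(\dyad{\varphi}+\dyad{\varphi_\perp})$, which is wrong both in the second eigenvalue and in the second eigenvector. Your lower bound $\|[L,P]\|_\infty\ge\delta_\varphi L$ and the final inequality $\delta_\varphi L\le\|L\|_\infty$ are correct; what cannot be established is the matching upper bound $\|[L,P]\|_\infty\le\delta_\varphi L$. The statement becomes true if $L$ is normal (then $\delta_\varphi L=\delta_\varphi L^\dagger$), or if one replaces the equality by $\|[L,P]\|_\infty=\max(\delta_\varphi L,\delta_\varphi L^\dagger)$, but as written the lemma is not salvageable and no amount of ``routine bookkeeping'' will close the gap you identified.
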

\begin{proof}
Defining the normalized pure state
\begin{equation}
	\ket{\varphi_\perp}\coloneqq\frac{(L-\mel{\varphi}{L}{\varphi})\ket{\varphi}}{\delta_\varphi L},
\end{equation}
we can easily confirm that $\braket{\varphi}{\varphi_\perp}=0$ and
\begin{equation}
	[L,\dyad{\varphi}]^\dagger [L,\dyad{\varphi}]=(\delta_\varphi L)^2\qty(\dyad{\varphi}+\dyad{\varphi_\perp}).
\end{equation}
Denoting the maximum eigenvalue of a matrix $A$ as $\lambda_{\rm max}(A)$, Eq.~\eqref{eq:lem1} can be proved as follows:
\begin{align}
	\|[L,\dyad{\varphi}]\|_\infty&=\lambda_{\rm max}(|[L,\dyad{\varphi}]|)\notag\\
	&=\lambda_{\rm max}(\sqrt{[L,\dyad{\varphi}]^\dagger [L,\dyad{\varphi}]})\notag\\
	&=\lambda_{\rm max}(\delta_\varphi L\sqrt{\dyad{\varphi}+\dyad{\varphi_\perp}})\notag\\
	&=\delta_\varphi L\notag\\
	&\le\sqrt{\mel{\varphi}{L^\dagger L}{\varphi}}\notag\\
	&\le\|L\|_\infty.
\end{align}
\end{proof}

\begin{lemma}\label{lem:delta.L}
For an arbitrary matrix $L$, we have
\begin{equation}
\int\dd{\varphi}(\delta_\varphi L)^2=\frac{d\tr{L^\dagger L} - |\tr L|^2}{d(d+1)}.
\end{equation}
\end{lemma}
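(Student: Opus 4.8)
The plan is to reduce the Haar average $\int\dd{\varphi}(\delta_\varphi L)^2$ to the first and second moments of the uniform distribution over pure states. First I would expand the integrand using the defining expression from Lemma~\ref{lem:op.norm},
\begin{equation}
(\delta_\varphi L)^2=\mel{\varphi}{L^\dagger L}{\varphi}-\qty|\mel{\varphi}{L}{\varphi}|^2,
\end{equation}
so that the computation splits into averaging a single expectation value and averaging the modulus square of an expectation value. Each piece is then fixed by a standard moment of the normalized Haar measure.

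For the first term I would invoke the first-moment identity $\int\dd{\varphi}\dyad{\varphi}=\mds{1}/d$, which gives at once
\begin{equation}
\int\dd{\varphi}\mel{\varphi}{L^\dagger L}{\varphi}=\tr{L^\dagger L\int\dd{\varphi}\dyad{\varphi}}=\frac{\tr{L^\dagger L}}{d}.
\end{equation}
For the second term I would use the fact that Haar-random pure states form a state $2$-design, namely
\begin{equation}
\int\dd{\varphi}\dyad{\varphi}\otimes\dyad{\varphi}=\frac{\mds{1}+\mbb{S}}{d(d+1)},
\end{equation}
where $\mbb{S}$ is the swap operator on the doubled Hilbert space. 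Writing $\qty|\mel{\varphi}{L}{\varphi}|^2=\tr{(L\otimes L^\dagger)(\dyad{\varphi}\otimes\dyad{\varphi})}$ and then using $\tr{(A\otimes B)\mbb{S}}=\tr{AB}$ together with $\tr{(A\otimes B)\mds{1}}=\tr A\tr B$, I obtain
\begin{equation}
\int\dd{\varphi}\qty|\mel{\varphi}{L}{\varphi}|^2=\frac{|\tr L|^2+\tr{L^\dagger L}}{d(d+1)}.
\end{equation}

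Subtracting the two contributions over the common denominator $d(d+1)$ collapses the $\tr{L^\dagger L}$ pieces to $(d+1-1)\tr{L^\dagger L}=d\tr{L^\dagger L}$ and leaves $-|\tr L|^2$, reproducing exactly the claimed identity. The only nontrivial input is the second-moment formula, so the main obstacle is establishing (or simply invoking) the $2$-design property and correctly evaluating the two traces of $L\otimes L^\dagger$ against $\mds{1}$ and $\mbb{S}$; the remainder is bookkeeping. As a consistency check I would confirm the two natural limits: $L=\mds{1}$, for which $\delta_\varphi\mds{1}=0$ and the right-hand side indeed vanishes, and $L$ traceless, for which the formula reduces to $\tr{L^\dagger L}/(d+1)$.
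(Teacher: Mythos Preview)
Your proposal is correct and follows essentially the same approach as the paper: both split $(\delta_\varphi L)^2$ into the linear term $\mel{\varphi}{L^\dagger L}{\varphi}$ and the quadratic term $|\mel{\varphi}{L}{\varphi}|^2$ and evaluate each via the first and second Haar moments. The only cosmetic difference is that the paper quotes the second moment in the scalar form $\int\dd{\varphi}\mel{\varphi}{X}{\varphi}\mel{\varphi}{Y}{\varphi}=(\tr{XY}+\tr X\tr Y)/[d(d+1)]$, whereas you use the equivalent tensor form $\int\dd{\varphi}\dyad{\varphi}^{\otimes 2}=(\mds{1}+\mbb{S})/[d(d+1)]$ and then contract against $L\otimes L^\dagger$.
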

\begin{proof}
We note that the following relation holds for arbitrary matrices $X$ and $Y$ \cite{Dankert.2005.arxiv}:
\begin{equation}\label{eq:ens.rel}
\int\dd{\varphi}\mel{\varphi}{X}{\varphi}\mel{\varphi}{Y}{\varphi}=\frac{1}{d(d+1)}\qty( \tr{XY}+\tr X\tr Y ).
\end{equation}
Substituting $X=L^\dagger L$ and $Y=\mds{1}$ to Eq.~\eqref{eq:ens.rel}, we obtain
\begin{equation}
	\int\dd{\varphi}\mel{\varphi}{L^\dagger L}{\varphi}=\frac{\tr{L^\dagger L}}{d}.
\end{equation}
Similarly, substituting $X=L$ and $Y=L^\dagger$ to Eq.~\eqref{eq:ens.rel}, we obtain
\begin{equation}
	\int\dd{\varphi}|\mel{\varphi}{L}{\varphi}|^2=\frac{1}{d(d+1)}\qty(\tr{L^\dagger L}+|\tr L|^2).
\end{equation}
Using these relations, we can calculate $\int\dd{\varphi}(\delta_\varphi L)^2$ as follows:
\begin{align}
	\int\dd{\varphi}(\delta_\varphi L)^2&=\int\dd{\varphi}\qty(\mel{\varphi}{L^\dagger L}{\varphi}-|\mel{\varphi}{L}{\varphi}|^2)\notag\\
	&=\frac{\tr{L^\dagger L}}{d}-\frac{1}{d(d+1)}\qty(\tr{L^\dagger L}+|\tr L|^2)\notag\\
	&=\frac{d\tr{L^\dagger L} - |\tr L|^2}{d(d+1)}.
\end{align}
\end{proof}

\section{Proof of the generalized relation \eqref{eq:main.res.gen} for the imperfect-Hamiltonian case}\label{app:proof.res.gen}
We consider the case that the implemented Hamiltonian $\hat H_t$ is different from the ideal Hamiltonian $H_t$.
We follow the same strategy in the previous section by defining the unitary operator in the interaction picture $\hat U_t\coloneqq\vec{\mbb{T}}\exp\qty{-i\int_0^t\dd{s}\hat H_s}$.
Note that $\hat U_\tau\neq U_g$.
In this case, the fidelity can be calculated as
\begin{align}
	\mca{F}_\varphi(\tau)&=\mel{\varphi}{U_g^\dagger\varrho_\tau U_g}{\varphi}=\mel{\varphi}{U_g^\dagger\hat U_\tau\tilde\varrho_\tau\hat U_\tau^\dagger U_g}{\varphi}\notag\\
	&=|\braket{\varphi'}{\varphi}|^2+\int_0^\tau\dd{t}\mel{\varphi'}{\dot{\tilde{\varrho}}_t}{\varphi'},
\end{align}
where we define $\ket{\varphi'}\coloneqq \hat U_\tau^\dagger U_g\ket{\varphi}$.
Following the same procedure in the perfect case, we obtain the generalized relation for the initial pure state $\dyad{\varphi}$, given by
\begin{equation}
	\mca{F}_\varphi(\tau)+\sqrt{\gamma_{\varphi'}\Sigma_\varphi(\tau)/2}\ge|\mel{\varphi}{U_g^\dagger\hat U_\tau}{\varphi}|^2.
\end{equation}
Taking the ensemble average over all pure states and noticing that
\begin{align}
	\int\dd{\varphi}|\mel{\varphi}{U_g^\dagger\hat U_\tau}{\varphi}|^2&=\frac{|\tr{\hat U_\tau^\dagger U_g}|^2+d}{d(d+1)},\\
	\int\dd{\varphi}\gamma_{\varphi'}&=\gamma,
\end{align}
we immediately obtain the generalized fidelity-dissipation relation \eqref{eq:main.res.gen},
\begin{equation}
	\mca{F}+\sqrt{\gamma\Sigma/2}\ge\frac{|\tr{\hat U_\tau^\dagger U_g}|^2+d}{d(d+1)}.
\end{equation}

\section{Derivation of the fidelity-activity trade-off relation}\label{app:proof.fa}
Here we derive a trade-off relation between fidelity and activity for arbitrary quantum gates without the assumption of local detailed balance condition.
To this end, we calculate $\mel{\varphi}{\dot{\tilde{\varrho}}}{\varphi}$ as follows:
\begin{align}
	\mel{\varphi}{\dot{\tilde{\varrho}}}{\varphi}&=\mel{\varphi}{\sum_c\mca{D}[\tilde{L}_c]\tilde{\varrho}}{\varphi}\notag\\
	&=\mel{\varphi}{\sum_c\qty[ \tilde{L}_c\tilde{\varrho}\tilde{L}_c^\dagger - \{\tilde{L}_c^\dagger \tilde{L}_c,\tilde{\varrho}\}/2]}{\varphi}\notag\\
	&=\frac{1}{2}\sum_c\tr{(\tilde{L}_c^\dagger[\dyad{\varphi},\tilde{L}_c]+[\tilde{L}_c^\dagger,\dyad{\varphi}]\tilde{L}_c)\tilde{\varrho}}.\label{eq:fa.tmp0}
\end{align}
Applying the inequalities $|\tr{AB}|\le\|A\|_\infty\|B\|_1$, $\|AB\|_\infty\le\|A\|_\infty\|B\|_\infty$, and $\|A+B\|_\infty\le\|A\|_\infty+\|B\|_\infty$, we can bound from above as
\begin{align}
	|\mel{\varphi}{\dot{\tilde{\varrho}}}{\varphi}|&\le\frac{1}{2}\sum_c\|\tilde{L}_c\|_\infty(\delta_\varphi\tilde{L}_c+\delta_\varphi\tilde{L}_c^\dagger).
\end{align}
Consequently, we obtain
\begin{equation}\label{eq:fa.tmp1}
	1-\mca{F}_\varphi(\tau)\le\frac{1}{2}\int_0^\tau\dd{t}\sum_c\|\tilde{L}_c(t)\|_\infty[\delta_\varphi\tilde{L}_c(t)+\delta_\varphi\tilde{L}_c(t)^\dagger].
\end{equation}
Applying the Cauchy-Schwarz inequality, we can prove that
\begin{align}
	\int\dd{\varphi}\delta_\varphi\tilde{L}_c(t)&\le\sqrt{\int\dd{\varphi}[\delta_\varphi\tilde{L}_c(t)]^2}=\Delta L_c(t).\label{eq:fa.tmp3}
\end{align}
Note that $\Delta L=\Delta L^\dagger$.
Taking the ensemble average in Eq.~\eqref{eq:fa.tmp1} and applying Eq.~\eqref{eq:fa.tmp3}, we get
\begin{align}
	1-\mca{F}&\le\frac{1}{2}\int_0^\tau\dd{t}\sum_c\|\tilde{L}_c(t)\|_\infty\int\dd{\varphi}[\delta_\varphi\tilde{L}_c(t)+\delta_\varphi\tilde{L}_c(t)^\dagger]\notag\\
	&\le\int_0^\tau\dd{t}\sum_c\|L_c(t)\|_\infty\Delta L_c(t)\eqqcolon\upsilon.\label{eq:fa.tmp2}
\end{align}
Here we define the activity quantity $\upsilon$, which is determined solely by the norm of jump operators.
Physically, $\upsilon$ quantifies the thermalization strength of the quantum gate. 
By rearranging Eq.~\eqref{eq:fa.tmp2}, the trade-off relation between fidelity and activity is immediately achieved,
\begin{equation}
	\mca{F}+\upsilon\ge 1.
\end{equation}

\section{Proof of the relation \eqref{eq:main.res.therm.relax} for time-independent quantum gates}\label{app:proof.res.ther.rel}
For an open quantum system that undergoes a thermal relaxation process, it was shown that the irreversible entropy production is always lower bounded by a relative entropy as \cite{Vu.2021.PRL2}
\begin{equation}\label{eq:ent.prod.lb}
	\Sigma_\varphi(\tau)\ge S_{\rm M}(U_g\varrho_0 U_g^\dagger||\varrho_\tau).
\end{equation}
Here, $S_{\rm M}$ is the projectively measurement relative entropy, given by
\begin{equation}
	S_{\rm M}(\varrho||\sigma)\coloneqq S(\varrho||\sum_n\Pi_n\sigma\Pi_n),
\end{equation}
where $\{\Pi_n\}$ are the eigenstates of $\varrho$.
Since the initial state is a pure state (i.e., $\varrho_0=\dyad{\varphi}$), Eq.~\eqref{eq:ent.prod.lb} can be proceeded further to obtain
\begin{equation}
	\Sigma_\varphi(\tau)\ge -\ln\mel{\varphi}{U_g^\dagger\varrho_\tau U_g}{\varphi}=-\ln\mca{F}_\varphi(\tau),
\end{equation}
or equivalently $\mca{F}_\varphi(\tau)\ge e^{-\Sigma_\varphi(\tau)}$.
Since $e^{-x}$ is the convex function, by applying Jensen's inequality, the relation \eqref{eq:main.res.therm.relax} can be derived as follows:
\begin{equation}
	\mca{F}\ge\int\dd{\varphi}e^{-\Sigma_\varphi(\tau)}\ge e^{-\int\dd{\varphi}\Sigma_\varphi(\tau)}=e^{-\Sigma}.
\end{equation}

\section{Proof of the trade-off relation \eqref{eq:FQ.tradeoff} for general time-independent gates}\label{app:proof.ener.diss}
For convenience, we define the inner product $\ev{X,Y}\coloneqq\tr{X^\dagger Y}$ and the Frobenius norm $\|X\|_F^2\coloneqq\ev{X,X}$.
First, we derive the Kraus representation for the quantum gate.
Let $\varrho_E=\sum_\mu\lambda_\mu\dyad{\lambda_\mu}$ be the spectral decomposition of the initial state of the environment.
Using this expression, the Kraus representation for the quantum state of the gate at time $\tau$ can be obtained as
\begin{align}
	\varrho_\tau&=\tr_E\{U(\varrho_0\otimes\varrho_E)U^\dagger\}\notag\\
	&=\sum_{\mu}\bra{\lambda_\mu}U(\varrho_0\otimes\varrho_E)U^\dagger\ket{\lambda_\mu}\notag\\
	&=\sum_{\mu,\nu}\bra{\lambda_\mu}U(\varrho_0\otimes \lambda_\nu\dyad{\lambda_\nu})U^\dagger\ket{\lambda_\mu}\notag\\
	&=\sum_{\mu,\nu}K_{\mu\nu}\varrho_0K_{\mu\nu}^\dagger,
\end{align}
where we define $K_{\mu\nu}\coloneqq\sqrt{\lambda_\nu}\mel{\lambda_\mu}{U}{\lambda_\nu}$.
It can be easily confirmed that $\sum_{\mu,\nu}K_{\mu\nu}^\dagger K_{\mu\nu}=\mds{1}$.
Using this Kraus representation, the average gate fidelity can be analytically calculated as
\begin{align}
	\mca{F}&=\int\dd{\varphi}\mel{\varphi}{U_g^\dagger\varrho_\tau U_g}{\varphi}\notag\\
	&=\sum_{\mu,\nu}\int\dd{\varphi}\mel{\varphi}{U_g^\dagger K_{\mu\nu}\dyad{\varphi}K_{\mu\nu}^\dagger U_g}{\varphi}\notag\\
	&=\frac{1}{d(d+1)}\sum_{\mu,\nu}\qty[ \tr{U_g^\dagger K_{\mu\nu}K_{\mu\nu}^\dagger U_g} + |\tr{U_g^\dagger K_{\mu\nu}}|^2]\notag\\
	&=\frac{1}{d(d+1)}\qty(d+\sum_{\mu,\nu}|\ev{U_g,K_{\mu\nu}}|^2).
\end{align}
Here we use the relation $\sum_{\mu,\nu}K_{\mu\nu}^\dagger K_{\mu\nu}=\mds{1}$ to obtain the last line.
Note that
\begin{equation}
	\ev{X,X}\ev{Y,Y}-|\ev{X,Y}|^2=\|X\|_F^2\left\|Y-\frac{\ev{Y,X}}{\|X\|_F^2}X\right\|_F^2.
\end{equation}
Using this relation, $1-\mca{F}$ can be calculated further as
\begin{align}
	1-\mca{F}&=\frac{1}{d(d+1)}\qty(d^2-\sum_{\mu,\nu}|\ev{U_g,K_{\mu\nu}}|^2)\notag\\
	&=\frac{1}{d(d+1)}\sum_{\mu,\nu}\qty[ \ev{U_g,U_g} \ev{K_{\mu\nu},K_{\mu\nu}} - |\ev{U_g,K_{\mu\nu}}|^2]\notag\\
	&=\frac{1}{d+1}\sum_{\mu,\nu}\left\| K_{\mu\nu} - \kappa_{\mu\nu}U_g \right\|_F^2\notag\\
	&=\frac{1}{d+1}\sum_{\mu,\nu}\left\| \bar{K}_{\mu\nu} \right\|_F^2,
\end{align}
where we define $\kappa_{\mu\nu}\coloneqq\ev{K_{\mu\nu},U_g}/d$ and $\bar{K}_{\mu\nu}\coloneqq K_{\mu\nu} - \kappa_{\mu\nu}U_g$.
Noting that $[U_g,H]=0$, $[U_g^\dagger,H]=0$, and $\ev{X,[Y,Z]}=\ev{Z^\dagger,[X^\dagger,Y]}$, the energy dissipation can also be calculated as follows:
\begin{align}
	\mca{Q}&=\int\dd{\varphi}\tr{H(\varrho_\tau-\varrho_0)}\notag\\
	&=\sum_{\mu,\nu}\int\dd{\varphi}\tr{HK_{\mu\nu}\dyad{\varphi}K_{\mu\nu}^\dagger}-\int\dd{\varphi}\mel{\varphi}{H}{\varphi}\notag\\
	&=\frac{1}{d}\qty(\sum_{\mu,\nu}\tr{K_{\mu\nu}^\dagger HK_{\mu\nu}}-\tr H)\notag\\
	&=\frac{1}{d}\sum_{\mu,\nu}\tr{K_{\mu\nu}^\dagger HK_{\mu\nu}-HK_{\mu\nu}^\dagger K_{\mu\nu}}\notag\\
	&=\frac{1}{d}\sum_{\mu,\nu}\ev{K_{\mu\nu},[H,K_{\mu\nu}]}\notag\\
	&=\frac{1}{d}\sum_{\mu,\nu}\ev{K_{\mu\nu}-\kappa_{\mu\nu}U_g,[H-\epsilon_*\mds{1},K_{\mu\nu}-\kappa_{\mu\nu}U_g]}\notag\\
	&=\frac{1}{d}\sum_{\mu,\nu}\ev{\bar{K}_{\mu\nu},[H-\epsilon_*\mds{1},\bar{K}_{\mu\nu}]}.
\end{align}
Here, $\epsilon_*$ is an arbitrary number that will be determined later.
Applying Lemma \ref{lem:F.norm} to the energy dissipation term, we can upper bound $|\mca{Q}|$ as
\begin{align}
	d|\mca{Q}|&\le 2\|H-\epsilon_*\mds{1}\|_\infty\sum_{\mu,\nu}\|\bar{K}_{\mu\nu}\|_F^2\notag\\
	&=2\|H-\epsilon_*\mds{1}\|_\infty(d+1)(1-\mca{F}).\label{eq:fqr.tmp1}
\end{align}
Let $H=\sum_n\epsilon_n\dyad{\epsilon_n}$ be the spectral decomposition of the gate Hamiltonian.
Choosing $\epsilon_*=(\max_n\epsilon_n+\min_n\epsilon_n)/2$, we have $\|H-\epsilon_*\mds{1}\|_\infty=g/2$, where $g\coloneqq\max_n\epsilon_n-\min_n\epsilon_n$.
Consequently, by rearranging Eq.~\eqref{eq:fqr.tmp1}, we readily obtain the desired trade-off relation \eqref{eq:FQ.tradeoff} in the main text as
\begin{equation}
	\mca{F}+\frac{d}{(d+1)g}|\mca{Q}|\le 1.
\end{equation}

\subsection{Generalization to general time-dependent gates}
Here we derive a generalization of the trade-off relation \eqref{eq:FQ.tradeoff} to the case of time-dependent Hamiltonians.
In this case, the system energy can change due to the work injected from the external control even when the quantum gate is not affected by the environment. In other words, fidelity can be at its maximum value of 1, whereas the energy change is nonzero. Therefore, there is no trade-off relation between fidelity and energy change with respect to the system Hamiltonian $H$. 
To establish the trade-off relation in this situation, we consider an effective Hamiltonian $\bar{H}$ that yields the same unitary evolution,
\begin{equation}
	e^{-i\bar{H}\tau}=U_g=\vec{\mbb{T}}\exp(-i\int_0^\tau\dd{t}H_t).
\end{equation}
Note that $[\bar{H},U_g]=0$ and $\bar{H}^\dagger=\bar{H}$.
The average energy change with respect to this effective Hamiltonian after applying the gate can be defined as
\begin{equation}
	\bar{\mca{Q}}\coloneqq\int\dd{\varphi}\tr{\bar{H}(\varrho_\tau-\varrho_0)}.
\end{equation}
$\bar{\mca{Q}}$ can be considered as the average energy change in the effective dynamics.
Evidently, $\bar{\mca{Q}}$ vanishes if the quantum gate is not affected by the environment.
Then, following the same procedure as in the time-independent case, we immediately obtain the following trade-off relation:
\begin{equation}
	\mca{F}+\frac{d}{(d+1)\bar{g}}|\bar{\mca{Q}}|\le 1.
\end{equation}
Here, $\bar{g}$ denotes the energy bandwidth of the effective Hamiltonian $\bar{H}$ (i.e., $\bar{g}\coloneqq\max_n\bar{\epsilon}_n-\min_n\bar{\epsilon}_n$, where $\{\bar{\epsilon}_n\}$ are the eigenvalues of $\bar{H}$).

\begin{lemma}\label{lem:F.norm}
For arbitrary matrices $X$ and $Y$, the following relation always holds:
\begin{equation}\label{eq:lem3}
	|\ev{X,[Y,X]}|\le 2\|Y\|_\infty\|X\|_F^2.
\end{equation}
\end{lemma}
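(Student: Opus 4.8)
The plan is to reorganize the inner product so that the commutator no longer acts between $Y$ and $X$ but instead produces the \emph{self}-commutator of $X$. Writing out $\ev{X,[Y,X]}=\tr{X^\dagger Y X}-\tr{X^\dagger X Y}$ and applying the cyclic property of the trace to each term, I would rewrite $\tr{X^\dagger Y X}=\tr{Y X X^\dagger}$ and $\tr{X^\dagger X Y}=\tr{Y X^\dagger X}$. Subtracting yields the identity
\begin{equation}
	\ev{X,[Y,X]}=\tr{Y[X,X^\dagger]},
\end{equation}
which is the crucial step: the quantity now depends on $Y$ only through a single factor and on $X$ only through the Hermitian self-commutator $[X,X^\dagger]=XX^\dagger-X^\dagger X$.

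From here the argument is standard. Applying the H\"older-type inequality $|\tr{AB}|\le\|A\|_\infty\|B\|_1$ (already invoked in the proof of Eq.~\eqref{eq:main.res}) with $A=Y$ and $B=[X,X^\dagger]$ gives $|\ev{X,[Y,X]}|\le\|Y\|_\infty\,\|[X,X^\dagger]\|_1$. It then remains to control the trace norm of the self-commutator. Since $XX^\dagger$ and $X^\dagger X$ are both positive semidefinite, the triangle inequality for the trace norm delivers
\begin{equation}
	\|[X,X^\dagger]\|_1\le\|XX^\dagger\|_1+\|X^\dagger X\|_1=\tr{XX^\dagger}+\tr{X^\dagger X}=2\|X\|_F^2,
\end{equation}
where the middle equality uses $\|A\|_1=\tr{A}$ valid for $A\ge 0$. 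Combining the two displays produces $|\ev{X,[Y,X]}|\le 2\|Y\|_\infty\|X\|_F^2$, which is precisely Eq.~\eqref{eq:lem3}.

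I do not anticipate a genuine obstacle: the entire content is the trace-cyclicity identity that converts $[Y,X]$ paired against $X$ into $Y$ paired against $[X,X^\dagger]$, after which the positivity of $XX^\dagger$ and $X^\dagger X$ makes the factor of $2$ automatic. The only point requiring a moment's care is that $Y$ need not be Hermitian, so I would make sure to invoke the operator-norm/trace-norm H\"older inequality in its general form rather than any spectral estimate specific to self-adjoint $Y$.
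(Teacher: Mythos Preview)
Your proof is correct and essentially identical to the paper's. The paper expands $\ev{X,[Y,X]}$ into $\tr{YXX^\dagger}-\tr{YX^\dagger X}$, applies the triangle inequality $|\cdot|\le|\tr{YXX^\dagger}|+|\tr{YX^\dagger X}|$, and then H\"older to each term; you instead first package the difference as $\tr{Y[X,X^\dagger]}$, apply H\"older once, and then the triangle inequality on $\|\cdot\|_1$---the same ingredients in the opposite order.
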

\begin{proof}
Applying inequality $|\tr{YX^\dagger X}|\le\|Y\|_\infty\|X^\dagger X\|_1=\|Y\|_\infty\|X\|_F^2$, Eq.~\eqref{eq:lem3} can be proved as follows:
\begin{align}
	|\ev{X,[Y,X]}|&\le|\tr{YXX^\dagger}|+|\tr{YX^\dagger X}|\notag\\
	&\le 2\|Y\|_\infty\|X\|_F^2.
\end{align}
\end{proof}


%

\end{document}